\def\bR{{\mathbb R}}
\def\beq{\begin{eqnarray}}
\def\eeq{\end{eqnarray}}
\def\ga{\gamma}
\newcommand{\supp}[1]{\text{supp }{#1}}
\def\vsp{\vspace{0.2cm}}
\def\sse #1 {\vsp\ifhmode{\par}\fi\refstepcounter{subsection}
  \noindent {\bf\thesubsection}. {\em #1}.\quad
  \addcontentsline{toc}{subsection}{\protect\numberline{\thesubsection} #1}%
  }
\def\ssb #1 {\vsp\ifhmode{\par}\fi\refstepcounter{subsection}
  \noindent {\bf\thesubsection.} {\bf #1.}\quad
  \addcontentsline{toc}{subsection}{\protect\numberline{\thesubsection} #1}%
  }
\def\ssa #1 {\ifhmode{\par}\fi\refstepcounter{subsection}
  \noindent {\bf\thesubsection.} {\bf #1.}\quad
  \addcontentsline{toc}{subsection}{\protect\numberline{\thesubsection} #1}%
  }
\def\remark {\vsp\ifhmode{\par}\fi\noindent\noindent {\bf Remark:} 
}
\newtheorem{proposition}{Proposition}[section]
\newtheorem{definition}{Definition}[section]
\begin{document} 
%
%
 
\par 
\bigskip 
\LARGE 
\noindent 
{\bf Influence of quantum matter  fluctuations on geodesic deviation} 
\bigskip 
\par 
\rm 
\normalsize 
 

\large
\noindent 
{\bf Nicol\`o Drago$^{a}$}, {\bf Nicola Pinamonti$^{b}$} \\
\par
\small
\noindent Dipartimento di Matematica, Universit\`a di Genova,  
via Dodecaneso, 35 I-16146 Genova,  Italy. \smallskip
\smallskip

\noindent E-mail: 
$^a$drago@dima.unige.it,   $^b$pinamont@dima.unige.it\\ 

\normalsize

\par 
 
\rm\normalsize 

\rm\normalsize 
 
 
\par 
\bigskip 
 
\rm\normalsize 

\par 
\bigskip

\rm\normalsize 

\noindent 
\small 
{\bf Abstract}.
We study the passive influence of quantum matter fluctuations on the expansion parameter of a congruence of timelike geodesics in a semiclassical regime.
In particular, we show that, the perturbations of this parameter can be considered to be elements of the algebra of observables of matter fields at all perturbative orders.
Hence, once a quantum state for matter is chosen, it is possible to explicitly evaluate the behavior of geometric fluctuations.
After introducing the formalism necessary to treat similar problems, in the last part of the paper, we estimate the approximated probability of having a geodesic collapse in a flat spacetime due to those fluctuations. 
\normalsize

\vskip .3cm

\section{Introduction}
Usually, semiclassical Einstein equations are used to evaluate the influence of quantum matter on classical curvature. 
These equations are obtained discussing quantum field theories propagating on classical curved backgrounds $(M,g)$ and equating the expectation value of the matter stress tensor $T_{ab}$, in a suitable state $\omega$, with the classical Einstein tensor \cite{Wald}. Assuming units $c=\hbar=8\pi G_N=1$ they read
\begin{equation}\label{eq:semiclassical-einstein-equation}
G_{ab} = \langle T_{ab} \rangle_\omega.
\end{equation}
We recall that one of the most promising alternatives to the inflaton driven inflationary cosmological model are based on this equation \cite{Starobinsky, kls}. 
Here we want to stress that, a similar equation can make sense only in the regime where the fluctuations of $T$ are negligible. Unfortunately, quantum fluctuations of point-like fields like $T_{ab}(x)$ are divergent and thus they can never be neglected.
These problems becomes milder when smearing of $T_{ab}$ is taken into account and, in particular, when the smearing function is sufficiently large these fluctuations become very small.
Unfortunately, the smeared version of \eqref{eq:semiclassical-einstein-equation} is a non local equation which furthermore, breaks 
covariance.

There is however an alternative way to deal with the semiclassical regime  
taking also into account the influence of fluctuations of $T_{ab}$ on $G_{ab}$. We can actually impose that the metric is described by a random field and we can assume that quantum fluctuations of the curvature tensor are passively sourced by $T_{ab}$.
This point of view is the one proper of the theory of stochastic gravity. A nice review on this subject can be found in \cite{Stochastic}.
Following similar ideas, 
in this paper, we study the influence of quantum matter fluctuations on the deviation equation for congruences of timelike geodesics. 

More precisely, we study the passive influence of the fluctuations of $T_{ab}$ on the geodesic deviation equation. 
In the physical literature, similar problems have been discussed in various cases.
For example, Carlip, Mosna and Pitelli in \cite{Carlip}
analyzed the case of two-dimensional dilation gravity models,
employing the construction of probability distributions of the expectation values of 
$T_{ab}$ provided by Fewster, Ford and Roman \cite{FewsterFordRoman}.
In that paper, the authors show that, the quantum nature of the stress tensor induce a sharp focusing of light cones at Planck scale. Furthermore, they estimate the mean time in which the focusing occurs.
In \cite{Carlip}, the complicated interaction equation which describe the behavior of the geodesic congruence expansion parameter is solved dividing the time in finite intervals of the order of the Planck scale, and treating $T_{ab}$ as a constant on every interval; the value of this constant is chosen according to the probability distribution of $T_{ab}$.
The probability distribution for focusing is then obtained counting the occurrences of the light focusing events in a large number of random sequences.
Another recent work employing similar ideas is \cite{ParkinsonFord}, where the authors discuss the first order influence of matter fluctuations on null geodesics.

In this paper, we use a quantum massless minimally coupled scalar field $\varphi$  as a model for matter, and, after quantizing it on a fixed background which solves the semiclassical Einstein equation \eqref{eq:semiclassical-einstein-equation}, we study the evolution equation for the expansion parameter on a congruence of timelike geodesics. We assume that the influence of the stress tensor fluctuations are only manifest in the expansion parameter, hence its effect on the shear and on the twist of the congruence is neglected. Then we shall show that it is possible to solve the Raychaudhuri equation for the expansion parameter within the algebra of quantum matter fields restricted on the congruence of geodesics.

The analysis we shall perform makes use of methods developed in the so called
perturbative algebraic quantum field theory \cite{BF09,FredenhagenRejzner}.
The corresponding solution is thus given in terms of a formal power series in the coupling parameter.
However, at every order, every element of the series is a well defined field in the algebra of matter fields.
In order to obtain this result, there is the need of employing renormalization and regularization techniques proper of interacting field theory 
in analogy with the discussion presented in \cite{BF00}.
There are thus certain renormalization constants that need to be fixed.

Once a state is chosen it is possible to estimate the correlation functions for the field $\phi$ whose logarithmic derivative with respect to the proper time along the geodesic gives the expansion parameter $\theta$.
We shall accomplish this task for the case of a four dimensional Minkowski spacetime as background considering a congruence of timelike geodesics with vanishing shear and twist.
The perturbative solution is explicitly computed up to the second order in the Newton constant.
Furthermore the probability distribution for the field $\phi$ is approximated by a Gaussian distribution. Under this approximation, the computation of the mean time to collapse confirms the results obtained for two-dimensional dilatonic gravity in \cite{Carlip}. 

The paper is organized as follows: in the next section we recall the basic notion on quantization of the massless scalar field $\varphi$ on a globally hyperbolic spacetime in the context of the Algebraic Quantum Field Theory, and introduce the $\ast$-algebra of interest.
Section \ref{sec:3} is devoted to the construction of the algebra associated to the geometrical field $\phi$, introduced using the Raychaudhuri equation.
Finally, in Section \ref{sec:4} we apply the previously introduced formalism to the flat case, obtaining an expression for $\phi$ at second order in the coupling constant and approximating its probability distribution to a Gaussian. Finally some conclusions are given in the last section.

\section{Algebraic formulation of quantum field theory on curved spacetime}\label{sec:2}

Let $(M,g)$ be a globally hyperbolic spacetime, where $M$ is a smooth manifold and 
$g$ is a Lorentzian metric. We shall consider the
simplest possible matter field propagating on such background, 
namely, the massless minimally coupled field $\varphi$ whose classical equation of motion is
\begin{equation}\label{eq:equation-of-motion}
-\Box_g \varphi = 0 \;.
\end{equation}
Above $\Box_g$ is the D'Alembert operator constructed out of the Lorentzian metric $g$.
The quantization of this system is very well under control. We shall treat it according to the principle of algebraic quantization \cite{BFV,HW01,HW02}. Following \cite{BF09,BDF}, we introduce the non commutative $*$-algebra $\mathcal{A}(M)$ whose elements are the smooth functionals $F:\mathcal{E}(M)\rightarrow\mathbb{C}$ over field configurations, which are differentiable with compact support and 
whose functional derivative $F^{(n)}(\varphi)$ have vanishing wavefront set.
Here differentiability means that for each pair of smooth functions $h,\varphi\in\mathcal{E}(M)$ the function $\lambda\rightarrow F(\varphi+\lambda h)$ is often infinitely differentiable and the $n$-th derivative at $\lambda=0$ defines a symmetric distribution with compact support $F^{(n)}(\varphi)$ on $\mathcal{D}(M^n)$ with the property
\begin{equation*}
\frac{d^n}{d\lambda^n}_{|\lambda=0}F(\varphi+\lambda h)
=\left\langle F^{(n)}(\varphi),h^{\otimes n}\right\rangle,
\end{equation*}
and such that the application $(\varphi,h)\rightarrow\langle F^{(n)}(\varphi),h^{\otimes n}\rangle$ is continuous in the proper topology.
The support of $F$ is the closure of the union of the support of $F^{(1)}(\varphi)$ for all $\varphi\in\mathcal{E}(M)$. 

The product on $\mathcal{A}(M)$, denoted by $\star$,  and defined by
\begin{equation}\label{eq:prodotto *}
(F\star G)(\varphi)
=F(\varphi)G(\varphi)
+\sum_{n=0}^{+\infty}\frac{\imath^n}{2^nn!}\Delta^{\otimes n}(F^{(n)}(\varphi),G^{(n)}(\varphi)),
\end{equation}
is non commutative.
As an immediate consequence of the definition, the algebra $\mathcal{A}(M)$ is nothing but the algebra generated by the functionals $\boldsymbol{\varphi}(f):=\int dx f(x)\varphi(x),\ f\in\mathcal{D}(M)$, which satisfy
\begin{equation}\label{eq:relazioni algebriche}
[\boldsymbol{\varphi}(f),\boldsymbol{\varphi}(h)]_\star = i\Delta(f,h),
\end{equation}
where $\Delta$ is the advanced-minus-retarded solution of $-\Box$, whose existence is assured by the global hyperbolicity of $M$.
Moreover the $\ast$-operation on $\boldsymbol{\varphi}(f)$ acts as follows
\begin{equation}\label{operatore*}
\boldsymbol{\varphi}(f)^* = \boldsymbol{\varphi}(\overline{f}).
\end{equation}

States on $\mathcal{A}(M)$ are defined as linear functionals $\omega:\mathcal{A}(M)\rightarrow\mathbb{C}$ which are positive, i.e. $\omega(A^*A)\geq 0\ \forall A\in\mathcal{A}(M)$, and normalized ($\omega(\mathbf{1})=1, \mathbf{1}$ identity of the algebra).
These conditions imply that a state is completely described by his $n-$point functions
\begin{equation*}
\omega(f_1,\ldots,f_n)
=\omega(\boldsymbol{\varphi}(f_1)\star\ldots\star\boldsymbol{\varphi}(f_n)),
\end{equation*}
whose totally antisymmetric part is fixed by the second relation in (\ref{eq:relazioni algebriche}).

In the next step we want to include fields like the stress tensor in the algebra. This is necessary in order to evaluate the influence of quantum matter on congruences of timelike geodesics.
Notice that the classical stress tensor of the analyzed theory is
\begin{equation}
T_{\mu\nu} = \partial_\mu\varphi\partial_\nu\varphi - \frac{1}{2} g_{\mu\nu} \partial^\lambda\varphi\partial_\lambda\varphi\;.
\end{equation}
Unfortunately the stress tensor involves the product of point-like fields and, for this reason, it cannot be constructed as an element of $\mathcal{A}(M)$.  

However, as discussed in \cite{BF09,BDF}, after deforming the product of $\mathcal{A}(M)$ in a suitable way, it is possible to easily extend it to non linear local fields like $T_{\mu\nu}$. 
The aforementioned deformation is realized by means of an Hadamard function $H$ in such a way that the deformed product on generators reads
$$
\boldsymbol{\varphi}(f) \star_H \boldsymbol{\varphi}(h) = \boldsymbol{\varphi}(f)\boldsymbol{\varphi}(h)+H(f,h)\;.
$$
For generic elements $F,G$ the $\star_H$ product is defined as in formula \eqref{eq:prodotto *} where $\imath\Delta/2$ is substituted by $H$. We recall that the Hadamard function $H$ is actually an element of $\mathcal{D}'(M^2)$, whose antisymmetric part is $\imath\Delta(x,y)/2$, furthermore, it weakly satisfies the equation of motion up to smooth functions and it enjoys the micro local spectrum condition \cite{Radzikowski, BFK}. Notice also that the new product $\star_H$ preserves the relation (\ref{eq:relazioni algebriche}).

We shall indicate the extended algebra by $\mathcal{F}(M)$; its elements are again the functionals over smooth field configuration ($F:\mathcal{E}(M)\rightarrow\mathbb{C}$) which satisfy certain property on the wave front set of the functional derivatives, namely
\begin{equation}\label{eq:spetral-condition}
\mathcal{F}(M) = \left\{ F \text{ differentiable with compact support}\;,\quad WF(F^{(n)}(\varphi)) \cap (\overline{V_+}^n\cup \overline{V_-}^n) = \emptyset   \right\}.
\end{equation}

As discussed in \cite{Radzikowski}, the set of conditions expressed above characterize the Hadamard function $H$ up to smooth functions, hence not uniquely: different choices of $H$ correspond to different extensions which are, however, all isomorphic.
However, local non linear fields are not left invariant under the aforementioned isomorphisms: this results in a freedom in the definition of local fields like $T_{\mu\nu}$ which is nothing but the known regularization freedom \cite{HW01,HW05}.

As a final remark we would like to stress that the extended algebra can be tested on a smaller class of states: certainly, Hadamard states $\omega$ are contained within this class \cite{Radzikowski, BFK} and one can choose the two-point function $\omega_2$ of such a state as Hadamard function.
The study of the form of the stress tensor of a scalar field can be found in \cite{Mo03} or in \cite{HW05}.

\subsection{Restriction on timelike curves}

We are interested in restricting certain local observables like the point like product of fields on regular timelike curves $\gamma\subset M$.
The possibility of considering the restriction of local fields on timelike curves was discovered 
in \cite{Borchers}. Some further work in this direction can be found in \cite{keyl}. Recently similar ideas have been employed in \cite{Igor1, Igor2}. 
We would like also to remind the reader that the restriction of the expectation value of the stress tensor on timelike curves is used to derive bounds for the local energy density in curved space times, as is shown in \cite{Ford} or in a recent analysis in the realm  \cite{FewsterPfenning}. 

Here we would like to show that the restriction of fields on timelike curves, can be realized within $\mathcal{F}(M)$, in the sense that a certain set of functionals over smooth field configurations on $\gamma$ forms a sub algebra of $\mathcal{F}(M)$ defined above.
In order to clarify this point, we proceed introducing the following definition

\begin{definition}
Consider a timelike regular curve $\ga$ parametrized in such a way that $\dot{\ga}$ is future directed. The {\bf set of regular functionals} over smooth field configurations $\varphi\in C^\infty(\ga)$ is denoted by
\begin{equation}\label{eq:spetral-condition-gamma}
\mathcal{F}(\gamma) = \left\{ F\;, \text{differentiable with compact support}\;,\quad 
WF(F^{(n)}(\varphi)) \cap (\overline{\bR_+}^n\cup \overline{\bR_-}^n) = \emptyset   \right\}.
\end{equation}
where $\bR_+ = \{k\in\bR\;,\; k \geq 0\}$ is the set of positive directions and $\bR_- $ the set of negative ones.
\end{definition}

\noindent
Consider now the map $\iota_\gamma$ which realizes the restriction of the field configurations
\begin{equation}\label{eq:restriction}
\iota_\gamma: C^\infty(M) \to C^\infty(\gamma)\;,\qquad \iota_{\gamma} \varphi := \varphi\circ \gamma\;.
\end{equation}
We want to show that the pullback $\iota_\gamma^*$ on functionals over such smooth field configurations maps elements of $\mathcal{F}(\ga)$ to $\mathcal{F}(M)$.
We have actually the following proposition.

\begin{proposition}\label{pr:Fgamma-subset-FM}
Let $\ga$ be a smooth timelike curve, consider $F\in \mathcal{F}(\gamma)$
and $\iota_\gamma$ defined in \eqref{eq:restriction}, then  $\iota_\gamma^* F$ is contained in $\mathcal{F}(M)$.
\end{proposition}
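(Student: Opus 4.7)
I would verify the three defining properties of $\mathcal{F}(M)$ for $\iota_\gamma^*F$: differentiability with the required continuity, compact support, and the wavefront-set condition \eqref{eq:spetral-condition}. The key step is to identify the functional derivatives of $\iota_\gamma^*F$ as push-forwards of those of $F$ along $\gamma$, and then invoke H\"ormander's rule for the wavefront set under an embedding.

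Since $\iota_\gamma$ is linear, $\iota_\gamma(\varphi+\lambda h)=\iota_\gamma\varphi+\lambda\iota_\gamma h$; differentiating $n$ times at $\lambda=0$ and using differentiability of $F$ yields
\begin{equation*}
\frac{d^n}{d\lambda^n}\bigg|_{\lambda=0}(\iota_\gamma^*F)(\varphi+\lambda h)=\langle F^{(n)}(\iota_\gamma\varphi),(\iota_\gamma h)^{\otimes n}\rangle,
\end{equation*}
which identifies the $n$-th derivative of $\iota_\gamma^*F$ at $\varphi\in\mathcal{E}(M)$, seen as a distribution on $M^n$, with the push-forward
\begin{equation*}
(\iota_\gamma^*F)^{(n)}(\varphi)=(\gamma^{\times n})_*\,F^{(n)}(\iota_\gamma\varphi).
\end{equation*}
Compact support of $\iota_\gamma^*F$ is immediate, since $F^{(n)}(\iota_\gamma\varphi)$ is compactly supported in $\bR^n$ and $\gamma$ sends compacts to compacts; the joint continuity of $(\varphi,h)\mapsto\langle(\iota_\gamma^*F)^{(n)}(\varphi),h^{\otimes n}\rangle$ transfers from the corresponding property of $F$ via continuity of $\iota_\gamma:\mathcal{E}(M)\to C^\infty(\gamma)$.

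The core step is the wavefront bound. As $\gamma$ is a regular timelike curve, $\gamma^{\times n}:\bR^n\to M^n$ is a smooth embedding, and the standard rule for the wavefront set of a push-forward under an embedding gives
\begin{equation*}
WF\bigl((\iota_\gamma^*F)^{(n)}(\varphi)\bigr)\subset\bigl\{(\gamma(t_1),\dots,\gamma(t_n);\xi_1,\dots,\xi_n): (\tau_1,\dots,\tau_n)\in WF(F^{(n)}(\iota_\gamma\varphi))\cup\{0\}\bigr\},
\end{equation*}
with $\tau_i:=\langle\xi_i,\dot\gamma(t_i)\rangle$. Suppose a covector $(\xi_1,\dots,\xi_n)\in\overline{V_+}^n$. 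Since $\dot\gamma(t_i)$ is timelike future-directed, a nonzero future-directed causal $\xi_i$ pairs with it to give a $\tau_i$ of definite strict sign, while $\xi_i=0$ gives $\tau_i=0$. Because the covector cannot vanish entirely, at least one $\tau_i$ is strictly signed and none has opposite sign, so $(\tau_1,\dots,\tau_n)\in\overline{\bR_-}^n\setminus\{0\}$ (or $\overline{\bR_+}^n\setminus\{0\}$, depending on sign convention). The hypothesis $F\in\mathcal{F}(\gamma)$ then rules this tuple out of $WF(F^{(n)}(\iota_\gamma\varphi))$, and the fact that it is nonzero excludes the $\{0\}$ alternative. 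Hence $(\xi_1,\dots,\xi_n)\notin WF((\iota_\gamma^*F)^{(n)}(\varphi))$; the symmetric argument handles $\overline{V_-}^n$.

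The main subtlety is the mixed case where only some $\xi_i$ vanish: the tuple $(\tau_i)$ then acquires zero entries but still lies in $\overline{\bR_-}^n\setminus\{0\}$, so it remains in the forbidden set for $F^{(n)}$. This sector-by-sector reduction depends crucially on the timelike character of $\dot\gamma$, which prevents a nontrivial future-directed causal covector from being annihilated by $\dot\gamma(t_i)$; for a null or spacelike curve, such $\xi_i$ could occur and the implication, and thus the wavefront-set transfer, would fail.
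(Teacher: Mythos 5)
Your proposal is correct and follows essentially the same route as the paper: you identify the functional derivatives of $\iota_\gamma^*F$ as push-forwards of those of $F$ along $\gamma^{\times n}$, check compactness of supports, and bound the wavefront set by the push-forward calculus, concluding because a nonzero causal covector cannot annihilate the timelike vector $\dot\gamma$ and therefore pulls back to a strictly signed frequency tuple excluded by the definition of $\mathcal{F}(\gamma)$. The only cosmetic difference is that the paper derives the push-forward bound from H\"ormander's Theorem 8.2.13 applied to the integral kernel of $\iota_\gamma$, whereas you invoke the resulting rule directly (and you spell out the mixed case with some vanishing components, which the paper leaves terse).
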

\begin{proof}
Since the curve $\ga$ is smooth, we have that, for every $\varphi\in C^\infty(M)$, $\iota_\gamma\varphi$ is also smooth.
Furthermore, since $\iota_\gamma$ is also linear, $\iota_\gamma^*F$  is differentiable and thus we obtain an explicit formula 
$$
\iota_\gamma^* F^{(n)}(\varphi)(h_1\otimes\dots\otimes h_n) =  F^{(n)}(\iota_\gamma \varphi) (\iota_\gamma h_1\otimes\dots\otimes \iota_\gamma h_n)\;.
$$
Let us proceed discussing its support.
Notice that the support of $\iota_\gamma^*F^{(n)}$ is contained on $\gamma^{\otimes n}\subset M^n$ and there coincides with the image of the support of $F^{(n)}$ under the continuous map $\gamma^{\otimes n}$, hence it is compact.
In order to conclude the proof, we need to show that the wave front set of $\iota_\gamma^*F^{(n)}$ is disjoint from $\overline{V_+}^n\cup \overline{V_-}^n$.
We address this problem looking for bounds for the wave front set of 
$F^{(n)}(\iota_\gamma \varphi) (\iota_\gamma h_1\otimes\dots\otimes \iota_\gamma h_n)$
interpreted as the composition of $F^{(n)}(\iota_\gamma \varphi): \mathcal{D}(\gamma^{\otimes n} )\to \mathbb{C}$ with $\iota_\gamma^{\otimes n}$.

Notice that, for every $h\in \mathcal{D}(M)$, $\iota_\gamma(h)$ is a linear transformation from $\mathcal{D}(M)$ to $\mathcal{D}(\gamma)$. We indicate by $K$ its integral kernel which is thus an element of $\mathcal{D}'(M\times\gamma)$ and 
\begin{equation}\label{eq:WF(K)}
WF(K) =\left\{(x,t,k,\xi) \in T^*(M\times \gamma)\setminus \{0\}\;,\; \gamma(t)=x  \;,\; (d\gamma_{t} \xi - k)(\dot\ga) = 0    \right\}
\end{equation}
where $d\gamma$ is the differential map, hence, $d\gamma_t \xi$ is the push forward of $\xi$ to $T_{\gamma(t)}M$.
The distribution $\iota_\gamma^*F^{(n)}(\varphi)$ is then constructed as the composition of
$F^{(n)}(\iota_\gamma\varphi)$ with the multi linear map $\iota_\gamma^{\otimes n}$, along the line of theorem 8.2.13 of \cite{Hormander}.
Using the notation introduced therein, since 
$WF(K^{\otimes n})_{\gamma^n}$ is empty, the hypothesis of that theorem are fulfilled.
Thus the composition is well defined and its wave front set is such that
$$
WF(\iota_\gamma^*F^{(n)}) \subset WF(K^{\otimes n})_{M^n} \cup WF'(K^{\otimes n}) \circ WF(F^{(n)})\;. 
$$
Let us analyze carefully its form. First of all, notice that $WF(K^{\otimes n})_{M^n}$ contains only spatial directions, while, from \eqref{eq:spetral-condition-gamma}, $WF(F^{(n)})\cap (\overline{\bR_+}^n\cup \overline{\bR_-}+^n) = \emptyset$. 
In order to conclude the proof, we observe that the composition of $WF(F^{(n)})$ with $WF'(K^{\otimes n}) = \{ WF'(K) \cup  \{0\}   \}^{\otimes n} \setminus \{0\}$ 
where  $WF(K)$ is given in \eqref{eq:WF(K)},
 has disjoint intersection with $\overline{V_+}^n\cup \overline{V_-}+^n$.
\end{proof}

The previous proposition guaranties that the set $\mathcal{F}(\gamma)$ is contained in $\mathcal{F}(M)$, we shall now proceed introducing an algebraic structure on this set. 
Actually, we would like to equip $\mathcal{F}(\gamma)$ with a non commutative product $\star_R$ whenever $R\in \mathcal{D}'(\gamma\otimes\gamma)$, its symmetric part is real while its antisymmetric one is imaginary  and
$$
WF(R) \subset \left\{(x_1,x_2,k_1,k_2) \in T^*\bR^2\setminus\{0\}\;,\; x_1=x_1\;,\;   k_1=-k_2,\ k_1>0 \right\}\;.
$$
We shall consider a product defined as
\begin{equation}\label{eq:prod-Fgamma}
F\star_{R} G(\varphi) = F(\varphi)G(\varphi)  + \sum_{n\geq 1} \left\langle F^{(n)}(\varphi)   , R^{\otimes n} G^{(n)}(\varphi)  \right\rangle_n
\end{equation}
where $\langle \cdot,\cdot\rangle_n$ is the standard pairing in $T^*\bR^n$.
With this product $(\mathcal{F}(\gamma),\star_R)$ acquires the structure of a $*$-algebra if equipped with the $*$-operation corresponding to complex conjugation
$$
F^*(\varphi) = \overline{F(\overline{\varphi})}\;.
$$

We would like to show that whenever $R$ is chosen as the map of $H$ under the pullback 
$(\gamma\otimes\gamma)^*$ the product in $\mathcal{F}(\gamma)$ agrees with that of $\mathcal{F}(M)$.
We start with the following proposition:
\begin{proposition}\label{th:H-restriction}
Let $\gamma$ be a timelike regular curve. The pull-back $(\gamma\otimes\gamma)^*H$ of any Hadamard function $H$ on 
$\mathcal{D}(\gamma\otimes\gamma)$ is unambiguously defined and its wave front set is 
$$
WF((\gamma\otimes\gamma)^*H) = \left\{(t_1,t_2,k_1,k_2) \in T^*\bR^2 \setminus \{ 0\} \;,\;t_1=t_2\;,\; k_1=-k_2\;,\; k_1>0  \right\}
$$
\end{proposition}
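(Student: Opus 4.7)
The plan is to apply Hörmander's pullback theorem (Theorem 8.2.4 of \cite{Hormander}). First I would identify the conormal set $N_f$ of the embedding $f = \gamma \otimes \gamma : \bR^2 \to M^2$: at the point $(t_1, t_2)$ it consists of those covectors $(\xi_1, \xi_2) \in T^*_{\gamma(t_1)}M \times T^*_{\gamma(t_2)}M \setminus \{0\}$ satisfying $\xi_i(\dot\gamma(t_i)) = 0$ for $i = 1, 2$. Since $\dot\gamma$ is everywhere timelike, any nonzero $\xi_i$ annihilating it is necessarily spacelike. By the microlocal spectrum condition, $WF(H)$ is contained in the set of quadruples $(x_1, x_2, \xi_1, \xi_2)$ with $\xi_1$ and $-\xi_2$ future-directed null covectors coparallel to a null geodesic connecting $x_1$ and $x_2$; in particular each $\xi_i$ is null and nonzero. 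Therefore $N_f \cap WF(H) = \emptyset$, and Hörmander's theorem guarantees that the pullback $(\gamma \otimes \gamma)^* H$ is unambiguously defined.

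The upper bound on its wave front set follows directly from the same theorem:
\begin{equation*}
WF((\gamma \otimes \gamma)^* H) \subseteq \left\{ (t_1, t_2, k_1, k_2) : k_i = \xi_i(\dot\gamma(t_i)),\ (\gamma(t_1), \gamma(t_2), \xi_1, \xi_2) \in WF(H) \right\}.
\end{equation*}
Invoking the microlocal spectrum condition, at $t_1 = t_2$ the null geodesic degenerates to a point, $\xi_1 + \xi_2 = 0$, so $k_2 = \xi_2(\dot\gamma(t_2)) = -\xi_1(\dot\gamma(t_1)) = -k_1$; since $\xi_1$ is future-directed null and $\dot\gamma(t_1)$ future-directed timelike, their pairing is strictly positive, giving $k_1 > 0$. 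For $t_1 \neq t_2$, working locally in a convex normal neighborhood of $\gamma$ excludes any null geodesic joining the timelike-separated points $\gamma(t_1), \gamma(t_2)$, so no singular direction arises off the diagonal.

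For the reverse inclusion I would exploit the explicit form of the Hadamard parametrix: near the diagonal of $M^2$, $H$ admits the well-known asymptotic expansion $H \sim u/\sigma + v \log \sigma + w$ with $\sigma$ Synge's world function carrying the standard $i 0^+$ prescription along the timelike direction. Restricting along $\gamma$ with proper-time parametrization yields $\sigma(\gamma(t_1), \gamma(t_2)) = -\tfrac{1}{2}(t_1-t_2)^2 + O((t_1-t_2)^3)$, so the leading singularity of the pullback is of the form $(t_1 - t_2 - i 0^+)^{-2}$, whose wave front set is exactly $\{(t, t, k, -k) : k > 0\}$. This establishes the opposite inclusion and hence the claimed equality. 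The main obstacle lies precisely here: Hörmander's theorem delivers only an upper estimate, so confirming that every claimed direction actually appears requires the explicit local structure of the Hadamard parametrix rather than just the microlocal spectrum condition on $H$.
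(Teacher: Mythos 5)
Your argument for well-definedness and for the upper bound $WF((\gamma\otimes\gamma)^*H)\subset f^*WF(H)$ is the same as the paper's: compute the normal set $N_f$ of $f=\gamma\otimes\gamma$, observe that a nonzero covector annihilating a timelike vector must be spacelike while $WF(H)$ contains only null covectors, and apply Theorem 8.2.4 of H\"ormander. Where you genuinely diverge is the reverse inclusion, and you correctly identify this as the nontrivial step. The paper's route is more economical: since the upper bound leaves exactly one candidate direction $(k,-k)$, $k>0$, over each diagonal point, equality of wave front sets reduces to showing that the singular support of $f^*H$ is all of $D=\{t_1=t_2\}$; this follows because the antisymmetric part of $f^*H$ is $f^*(i\Delta/2)$, which is fixed by the CCR independently of the choice of $H$ and whose restriction is singular precisely on $D$. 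Your route instead invokes the local Hadamard expansion $u/\sigma_\epsilon+v\log\sigma_\epsilon+w$ and computes the leading restricted singularity $(t_1-t_2-i0^+)^{-2}$. This is correct (and it is essentially what the paper later uses concretely in \eqref{eq:2pt-mink-rest}), but it costs you two extra inputs that the paper's argument avoids: the identification of the microlocal spectrum condition with the local Hadamard form (Radzikowski's theorem --- the paper's definition of $H$ is microlocal, not series-based), and a remark that the subleading $v\log\sigma+w$ terms are strictly less singular and so cannot cancel the $u/\sigma$ term (worth stating explicitly, since $u=\Delta^{1/2}/4\pi^2$ is nonvanishing near the diagonal). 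Both proofs share the same small gap off the diagonal: in a general globally hyperbolic spacetime two distinct points of a timelike curve may be connected by a null geodesic, so the claim that $f^*WF(H)$ is supported on $t_1=t_2$ really requires either a local statement or an additional hypothesis; your appeal to a convex normal neighborhood at least flags this, which the paper does not.
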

\begin{proof}
In order to prove that the restriction of any $H$ on $\ga(\bR)\otimes \ga(\bR)$ is well defined, 
we shall show that the hypothesis of theorem 8.2.4 in \cite{Hormander}  are fulfilled.
Consider a regular parametrization of the curve $\ga:\bR\to M$. 
We shall hence build a map $f$ from $\bR^2\to M^2$ as $f(t_1,t_2) = (\ga(t_1),\ga(t_2))$.
The set of normals to that map is
$$
N_f = \left\{(\ga(t_1),\ga(t_2),k_1,k_2)\in T^*M^2\;,\; k_1(\dot\ga(t_1))=k_2(\dot\ga(t_2))=0\right\}\;.
$$
Notice that, since $\dot\ga(t)$ is a non vanishing timelike vector for all $t$, if $(t_1,t_2,k_1,k_2)$ is in $N_f$ then either $g^{-1}k_1$ or $g^{-1}k_2$ is spacelike.
All the covectors in $WF(H)$ are null, hence, 
we conclude that, 
$$
WF(H) \cap N_f = \emptyset\;.
$$
Hence by theorem 8.2.4 in \cite{Hormander}, the pullback of $H$ under $f^*$ is unambiguously defined and furthermore, 
$$
WF(f^*H) \subset f^* WF(H) = \left\{(t_1,t_2,k_1,k_2) \in T^*\bR^2 \setminus \{ 0\} \;,\;t_1=t_2\;,\; k_1=-k_2\;,\; k_1>0  \right\}\;.
$$
In order to conclude the proof, we need to show the other inclusion.
We notice that, for every choice of $t_1=t_2$, there is only one singular direction in $f^* WF(H)$. Thus, in order to have $WF(f^*H) \supset f^* WF(H)$, we just need to show that the singular support of $f^*H$ is formed by the set $D:=\{(t_1,t_2)\;, t_1=t_2\}$. The singular support of $f^*H$ surely contains the singular support of the causal propagator $\Delta$ which is proportional to the antisymmetric part of $H$. 
The singular support of $f^*\Delta$ is $D$.
\end{proof}

Thanks to the previous proposition, we have that $R:=(\gamma\otimes\gamma)^*H$ can be used in \eqref{eq:prod-Fgamma} to construct a product in $\mathcal{F}(\gamma)$.
Moreover, this choice implies that the two products $\star_H$ in $\mathcal{F}(M)$ and $\star_{R}$ in $\mathcal{F}(\gamma)$ agrees. 
Actually, we have the following proposition,

\begin{proposition}
Let $\gamma$ be a regular timelike curve, $\mathcal{F}(\ga)$ is a sub $*$-algebra  of  $\mathcal{F}(M)$.
\end{proposition}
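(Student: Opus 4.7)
The plan is to verify the two things needed for $\mathcal{F}(\gamma)$ to be a sub $*$-algebra of $\mathcal{F}(M)$, once its elements are identified with their images under $\iota_\gamma^*$. Inclusion as a set of functionals is already granted by Proposition \ref{pr:Fgamma-subset-FM}, so only compatibility of the products and of the $*$-operation need to be checked.

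For the $*$-operation, since $\gamma$ is a real curve, $\iota_\gamma$ commutes with complex conjugation of field configurations, and hence
$$
(\iota_\gamma^*F)^*(\varphi)=\overline{F(\overline{\iota_\gamma\varphi})}=F^*(\iota_\gamma\varphi)=\iota_\gamma^*(F^*)(\varphi),
$$
so the two $*$-operations are intertwined by $\iota_\gamma^*$. For the product, as observed inside the proof of Proposition \ref{pr:Fgamma-subset-FM}, the functional derivatives of $\iota_\gamma^*F$ are the push-forwards $(\iota_\gamma^*F)^{(n)}(\varphi)=\gamma^{\otimes n}_*\bigl(F^{(n)}(\iota_\gamma\varphi)\bigr)$ along $\gamma^{\otimes n}:\bR^n\to M^n$.

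With this identification, the required compatibility $\iota_\gamma^*(F\star_R G)=(\iota_\gamma^*F)\star_H(\iota_\gamma^*G)$ reduces, term by term in the formal series \eqref{eq:prodotto *} and \eqref{eq:prod-Fgamma}, to the identity
$$
\bigl\langle\gamma^{\otimes n}_*F^{(n)},\,H^{\otimes n}\,\gamma^{\otimes n}_*G^{(n)}\bigr\rangle=\bigl\langle F^{(n)},\,R^{\otimes n}G^{(n)}\bigr\rangle
$$
for every $n\geq 1$, with $R:=(\gamma\otimes\gamma)^*H$. This is nothing but the adjunction between push-forward and pull-back of distributions: pairing the push-forwards against $H^{\otimes n}$ on $M^{2n}$ is the same as pairing the original distributions on $\bR^{2n}$ against $(\gamma\otimes\gamma)^{*\otimes n}H^{\otimes n}=R^{\otimes n}$, which is the change-of-variables formula for distributions whenever both sides are well defined.

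The only delicate point, and hence the main obstacle, is the wave-front-set bookkeeping required to make each of these pairings unambiguous. On the $M$-side, Proposition \ref{pr:Fgamma-subset-FM} guarantees that $\WF((\iota_\gamma^*F)^{(n)})$ avoids $\overline{V_+}^n\cup\overline{V_-}^n$, while $\WF(H^{\otimes n})$ is contained in that double cone, so the $\star_H$ pairing is legitimate. On the $\gamma$-side, Proposition \ref{th:H-restriction} ensures that $R$ is a well defined distribution whose wave front set lies in the positive-axis cone prescribed after \eqref{eq:prod-Fgamma}, so that it is transverse to $\WF(F^{(n)})$ and $\WF(G^{(n)})$, which by the definition \eqref{eq:spetral-condition-gamma} of $\mathcal{F}(\gamma)$ avoid $\overline{\bR_+}^n\cup\overline{\bR_-}^n$. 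Once these wave-front-set conditions are in place, the identity above follows from a direct application of H\"ormander's pull-back theorem, and the algebraic structure on $\mathcal{F}(\gamma)$ inherited from $\mathcal{F}(M)$ coincides with the one defined by $\star_R$, completing the proof.
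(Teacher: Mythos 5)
Your proposal is correct and follows essentially the same route as the paper's own proof: set inclusion via Proposition \ref{pr:Fgamma-subset-FM}, agreement of $\star_R$ with $\star_H$ because $R=(\gamma\otimes\gamma)^*H$, and compatibility of the $*$-operations since restriction commutes with complex conjugation. The only difference is that you make explicit the push-forward/pull-back adjunction and the wave-front-set checks (via Propositions \ref{pr:Fgamma-subset-FM} and \ref{th:H-restriction}) that the paper compresses into the single phrase ``thanks to the choice of $R$ as the restriction of $H$''.
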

\begin{proof}
Proposition \ref{pr:Fgamma-subset-FM} implies that as a set $\mathcal{F}(\ga) \subset \mathcal{F}(M)$. 
Analyzing the form of \eqref{eq:prod-Fgamma} and thanks to the choice of $R$ as the restriction of $H$ on $\gamma\otimes\gamma$ we have that 
$$
\iota^*_\gamma(F\star_R G) = \iota^*_{\gamma}F\star_H \iota^*_{\gamma}G\;.
$$ 
We can conclude the proof, noticing that 
the restriction preserves complex conjugation, and hence   
$$
(\iota^*_\gamma F)^* = \iota^*_\gamma (F^*)
$$ 
for every $F$ in $\mathcal{F}(\ga)$.
\end{proof}
In order to simplify the presentation,  from now on, we shall not explicitly indicate $\iota$. 
For the same reason we shall denote the product in $\mathcal{F}(\gamma)$ by $\star_\omega$ namely, with the same symbol of the product in $\mathcal{F}(M)$.

\section{Deviation equation for congruences of timelike-geodesics}\label{sec:3}

Let us fix a background spacetime $(M,g)$ where $M$ is a smooth manifold and 
$g$ a background metric. Consider a congruence of timelike geodesics  on this background. 
We want to study what is the probability of having a focusing point induced by the quantum nature of $\varphi$. To this end we shall first of all remind how this problem is treated classically.

Let $\xi$ be the vector field of tangents to the congruence parametrized by proper time.
As usual, we start discussing the infinitesimal spatial displacement of the tensor field defined as  $B_{ab} = \nabla_b \xi_a$. 
This tensor field can be decomposed into simpler fields, namely the
{\bf expansion} $\theta = \nabla_b \xi^b$, the {\bf shear} $\sigma_{ab}$ and the {\bf twist} $\omega_{ab}$. The geodesic deviation equation induces some equations for these fields.
For a complete discussion we refer to \cite{Wald,HawkingEllis}, here we are interested only  in the Raychaudhuri equation namely the equation for the expansion parameter
\begin{equation}\label{eq:raychaudhuri}
\xi^c \nabla_c \theta = -\frac{1}{3} \theta^2 - \sigma_{ab} \sigma^{ab} +\omega_{ab}\omega^{ab} - R_{cd} \xi^c \xi^d\;,
\end{equation}
which can also be written as 
\begin{equation}
\dot\theta = -\frac{1}{3} \theta^2 - \sigma_{ab} \sigma^{ab} +\omega_{ab}\omega^{ab} - \left(T_{ab} 
-\frac{1}{2}g_{ab} T \right) \xi^a \xi^b\;.
\end{equation}
We shall use it to obtain the passive influence of matter fluctuations on the expansion parameter. In this process, as discussed in the introduction, we shall assume that both the shear and the twist of the geodesic congruences are not affected by those fluctuations, namely we shall assume that they are classical fields. 
This approximation can be justified analyzing the full deviation equation and noticing that, at least when both $\sigma_{ab}$ and $\omega_{ab}$ vanish on the background, their linear perturbations over the chosen solutions manifest as second order effects in \eqref{eq:raychaudhuri}.

When the matter content of the theory is described by a massless minimally coupled field indicated by $\varphi$, equation \eqref{eq:raychaudhuri} can be further  simplified.
Using $\theta = 3\frac{\dot\phi}{\phi}$, we get a linear equation in $\phi$
$$
 \ddot\phi + \frac{1}{3}\left(  \sigma_{ab} \sigma^{ab} -\omega_{ab}\omega^{ab} + \frac{1}{2}T_{anomalous}\right) \phi 
 +\frac{1}{3}  \dot \varphi  \dot \varphi \;\phi = 0\;,
$$
where we have indicated by $T_{anomalous}$ the quantum anomalous contribution to the stress tensor.
This term is responsible for the trace anomaly of the stress tensor \cite{Wald78}.
Further details on this term can be found in \cite{Mo03} or in \cite{HW05}. 

In the next section, we would like to check when there is a focusing of the geodesics forming the congruence. Notice that when $\theta$ diverges such a focusing occurs. A sufficient condition for this to occur is the vanishing of $\phi$.

\subsection{Perturbative analysis of the Raychaudhuri equation}

We shall here discuss the perturbative solution of the equation 
\begin{equation}\label{eq:quantum-one-interaction-YF}
 \ddot\phi + V \phi 
 +\frac{1}{3} \lambda \dot \varphi  \dot \varphi \;\phi = 0
\end{equation}
where, once a congruence is chosen, the classical external potential is 
$$
V:=\frac{1}{3}\left(  \sigma_{ab} \sigma^{ab} -\omega_{ab}\omega^{ab} + \frac{1}{2}T_{anomalous}\right) \;,
$$
which depends only on the background metric. 
We shall assume that $V$ is a smooth function which vanishes at infinity, as it happens in a large class of physical situations.

Moreover,  $\lambda=1$ plays the role of a coupling ``constant''.
The interesting part in the previous equation is in $\dot\varphi\dot\varphi \phi$ which represents a quantum interaction between the field $\varphi$ and $\phi$.
Furthermore, in order to isolate infrared problems, we shall consider initially 
$\lambda$ in $\mathcal{D}(\mathbb{R})$ and later we shall discuss the adiabatic limit, namely the limit $\lambda\to 1$.

In the approximation we are employing, as discussed above, we discard the influence of $\phi$ on $\varphi$, which means that neither the algebra $\mathcal{F}$ nor the states $\omega$ on that algebra are influenced by $\theta$.
With this in mind, up to the choice of some initial conditions or better a classical solution for the free case, the previous equation can be solved within $\mathcal{F}(\gamma)$. Later on, once a suitable state $\omega$  is chosen, we can obtain the $n$-point functions of $\phi$ out of the expectation values of $\varphi$.

The aim of the present section is to obtain  an expression for the solution $\phi$ in a perturbative series, proving that each term of the sum defines an element of the algebra $\mathcal{F}(\gamma)$.
When both $\dot\varphi$ and $\phi$ are classical smooth functions, equation \eqref{eq:quantum-one-interaction-YF} can be written in integral form, namely as a sort of Yang-Feldman equation 
\begin{equation}\label{eq:yang-feldmann}
\phi-\phi_0 +R_V(\lambda  \dot \varphi  \dot \varphi \;\phi) = 0
\end{equation}
where $\phi_0$ is a solution of the free equation and $R_V$ is the unique retarded fundamental solution associated to
\begin{equation}\label{eq:eqdiff1d}
P_\gamma=\frac{d^2}{dt^2} +V\;.
\end{equation}
The retarded fundamental solution is an operator from $\mathcal{D}(\bR)\to C^\infty$ such that, for every $f\in\mathcal{D}(\bR)$,
$$
P_\gamma R_V(f) = f\;
$$
and $\text{supp}(R_V(f))$ is contained in the future of the support of $f$.
This operator exists and it is unique, provided $V$ is sufficiently regular as is the case for the potentials we are considering in this section.
Moreover, the integral kernel associated to that map is of the form  
\begin{equation}\label{eq:RV}
R_V(x_1,x_2) = \vartheta(x_1-x_2) S(x_1,x_2)\qquad
R_Vf(x) = \int R_V(x,y)f(y)dy,
\end{equation}
where $\vartheta$ is the Heaviside step function and $S$ is the unique smooth real bi-solution of \eqref{eq:eqdiff1d} such that $S(x,x) = 0$, $S(x_1,x_2) = - S(x_2,x_1)$
and $\frac{\partial{S}}{{\partial x_1}}(x,x) = 1$. 

The solution $\phi$ of \eqref{eq:yang-feldmann} can be treated as a formal power series in the coupling parameter $\lambda$.
$$
\phi = \sum_{n=0}^\infty \phi_n
$$
where $\phi_n$ are homogeneous expression of degree $n$ in $\lambda$
and $\phi_0$ is a classical solution.
Hence,   
$$
\phi_n = -R_V(\dot\varphi^2 \phi_{n-1})\;.
$$

\subsection{Perturbative algebraic quantum field theory and renormalization}

Aim of this section is to show that, when $\varphi$ is treated as a quantum field, the perturbative analysis of \eqref{eq:yang-feldmann} can be performed at the algebraic level. 
In the literature, there are many approaches to the perturbative analysis of quantum field theories, here,
we shall follow a procedure similar to that developed in the framework of perturbative algebraic quantum field theory \cite{BF00, BDF} (see also the lecture notes \cite{BF09,FredenhagenRejzner}).
However, since the influence of curvature fluctuations on the matter field will be neglected,
the analysis will be much simpler, and thus we shall readapt the previously known results to that case. 
Furthermore, instead of constructing the interacting field by means of the Bogoliubov formula, we shall employ directly the perturbative solution of \eqref{eq:yang-feldmann}.
We shall treat certain types of ``retarded'' products of matter fields instead of time ordered ones. 
Despite of this difference, we shall see that it is possible to employ an inductive construction in the number of fields, similar to the one employed for time ordered products \cite{EG, BF00}. 
However, at every inductive step, these products are known up to the thin diagonal. 
We shall employ known renormalization techniques to extend those distributions 
to the thin diagonal preserving their scaling degree \cite{BF00}. 
We shall furthermore stress that the products employed in the present paper are slightly different than the 
 retarder product present in the literature, see \cite{glz, DF} for a definition.

We shall again treat the solution of equation \eqref{eq:yang-feldmann} as a formal power series in the coupling parameter $\lambda$.
We would like to show that the interacting fields, treated as formal power series in $\lambda$, are contained in $\mathcal{F}(\gamma)$.
Let us start noticing that, at least formally, choosing $\lambda \in \mathcal{D}(\mathbb{R})$ and using the $\star_\omega$ product induced by an Hadamard state $\omega$ we have that
\begin{gather}
\boldsymbol{\phi}_n(f) ``:=" (-1)^n
\int f(x_n) R_V(x_n,x_{n-1}) R_V(x_{n-1},x_{n-2})\dots  R_V(x_1,x_{0})  \;  \left( \lambda(x_{n-1})\dots \lambda(x_0)\right) \cdot  \;
\nonumber
\\
\cdot\left(\dot\varphi^2(x_{n-1}) \star_\omega \dots \star_\omega \dot\varphi^2(x_0) \right)         \phi_0(x_0)  \; dx_0\dots dx_n
\label{eq:phin}
\end{gather}
where we used a compressed  notation $\dot\varphi^2 \star_\omega \dots \star_\omega  \dot\varphi^2$
for the integral kernels of the distributions on $\mathcal{D}(\bR)^{\otimes n}$ 
$$
\dot\varphi^2 \star_\omega \dots\star_\omega \dot\varphi^2  (f_1\otimes\dots\otimes f_n)  :=   \dot{\boldsymbol{\varphi}}^2(f_1) \star_\omega \dots \star_\omega \dot{\boldsymbol{\varphi}}^2(f_n)\;.
$$
Actually, if we indicate by $T_f$ the element of $\mathcal{F}(\gamma)$ defined as
$$
T_f(\varphi) := \int_\bR f \dot\varphi^2 dx\;,
$$ 
a more precise definition of those integral kernels is
\begin{gather*}
\dot\varphi^2(x_{n}) \star_\omega \dots \star_\omega \dot\varphi^2(x_1):=
\frac{\delta^{n}}{\delta f_{n}(x_{n})\dots\delta{f_1}(x_{1})}T_{f_{n}}\star_\omega \dots \star_\omega T_{f} \;.
\end{gather*}
In order to make the expression presented in \eqref{eq:phin} precise, we need to discuss the products of the retarded distributions $R_V$ with the expectation values of $\omega$. However, they are in general too singular to be multiplied. The resulting product is thus well defined only outside of coinciding limit. 
Hence, we notice that  when the points $x_0$,\ldots,$x_n$ are all disjoint, it is possible to safely multiply $R_V$s with $\dot{\varphi}$s: furthermore, in that case we have formally
\begin{gather*}
R_V(x_{n-1},x_{n-2})\dots  R_V(x_1,x_0)\ \dot\varphi^2(x_{n-1}) \star_\omega \dots \star_\omega \dot\varphi^2(x_0)
:= \\
S(x_{n-1}-x_{n-2})\dots  S(x_1,x_{0})  \; r(x_{n-1},\dots , x_0)\qquad n\geq 1,
\end{gather*}
where we used the explicit form of $R_V$ given in \eqref{eq:RV} and 
$\{r(x_n,\dots x_0)\}_{n\geq 0}$ are certain distributions we are going to discuss.
Since $S$ are smooth functions, the problem that we have to deal with is about the definition of $r({x_n},\dots,x_0)$ when some points coincide.
In the next, we construct the $r$ with an inductive procedure, similar to the one used to construct time ordered products for interacting quantum fields. 
In order to employ this procedure, we enumerate here the properties of these $r$ we shall use.
The functional valued distributions $r(x_n,\ldots,x_0)$ are well defined outside the diagonals and satisfy the following properties:
\begin{itemize}
\item{\textbf{Functional equation}:} 
\begin{equation}\label{eq:def-r}
r({x_n},\dots,x_0)=
\frac{\delta^{n}}{\delta f_{n}(x_{n})\dots\delta{f_0}(x_{0})}T_{f_{n}}\star_\omega \dots \star_\omega T_{f_0}
\end{equation}
when all points $x_n,\ldots,x_0$ are strictly decreasing.

\item{{\bf Retardation:}} $r(x_n,\dots,x_0)=0$ if the points are not ordered, namely if does not hold $x_n\geq \dots \geq x_0$.
\item{{\bf Factorization:}} if $x_n\geq \dots \geq x_0$ and $x_m>x_{m-1}$ for some $0<m\leq n$ then
$$
r(x_n,\dots,x_0) =  r(x_n,\dots,x_m)\star_\omega r(x_{m-1},\dots,x_0) 
$$
\item{{\bf Initial element:}} $r(x_0)=\dot{\varphi}^2(x_0)$.
\end{itemize}
We will use this characterizing properties in order to extend the retarded products $r$ on the full space.
Once we have constructed all the $r$, we will have that  
\begin{equation}\label{eq:phi0e1-r}
\boldsymbol{\phi}_0(f) := \int f(x_0) {\phi}_0(x_0)\;   dx_0\;,\qquad 
\boldsymbol{\phi}_1(f) := -\int f_R(x_0) \phi_0(x_0)\lambda(x_0)\dot{\varphi}^2(x_0)\;  dx_0 
\end{equation}
where $f_R$ is the smooth function obtained applying the adjoint of the retarded operator to the smooth function $f$, formally 
$$
f_R(x):=\int f(y)R_V(y,x)  dy\;.
$$
Finally, for $n$ bigger than $1$, we will have
\begin{gather}
\boldsymbol{\phi}_n(f) = (-1)^n
\int f_R(x_{n-1}) S(x_{n-1},x_{n-2}) S(x_{n-2},x_{n-3})\dots  S(x_1,x_{0})  \;  \left( \lambda(x_{n-1})\dots \lambda(x_0)\right) \;  \phi_0(x_0)\;
\nonumber
\\
r(x_{n-1},\dots, x_{0})  \; dx_0\dots dx_n\;.
\label{eq:phin-r}
\end{gather}

\subsection{Inductive procedure up to the total diagonal}\label{sec:inductive}

Employing the functional equation and the retardation property of $r$ stated above, we can construct them up to the 
diagonals. 
Here, we shall see that
the elements $r$ can be constructed by means of an inductive procedure over the number of entries in a way similar to the analysis performed by Epstein and Glaser \cite{EG}.
In particular, we shall assume that we have fully constructed the $r$s with $n$ entries, we shall hence show
that, thanks to the factorization property, it is possible to construct $r$ with $n+1$ entries up to the thin diagonal $\Delta_{n+1}$
namely the set of elements $(x_n,\dots, x_0)\in \mathbb{R}^{n+1}$ such that $x_0=x_1=\dots=x_n$.
The step which is then missing is the extension on the full space.
We shall discuss the last step in the next section, here, we shall use the factorization property to construct $r$  in
$\mathcal{D}'(\mathbb{R}^{n+1}\setminus \Delta_{n+1};\mathcal{F})$ having already constructed 
$r\in\mathcal{D}'(\mathbb{R}^{m};\mathcal{F})$ for every $m\leq n$.

Let us start discussing the patching of $\mathbb{R}^{n+1}$ up to the total diagonal in chart where $r$ can be factorized.
In accomplishing this step we shall adapt the analysis presented in  \cite{BF00} for the time ordered products to the retarded one and hence the number of charts we shall use will be much smaller. 
The set $\mathcal{S}^c$ is the complement of the set $\mathcal{S} = \{(x_n,\dots,x_0)\;,x_n\geq \dots \geq x_0\}$: due to causality $r$ vanishes on $\mathcal{S}^c$.
The other charts we shall use are $\mathcal{C}_m$ for $m\in\{1,\dots,n\}$, where $\mathcal{C}_m$ is defined as the open set of points $(x_n,\dots, x_0)\in \mathbb{R}^{n+1}$ where $x_i > x_j$ for every $i < m$, and $j \geq m$.

\begin{proposition}
Up to the small diagonal the space $\mathbb{R}^{n+1}$ is covered by the union of all the $\mathcal{C}_m$ and by $\mathcal{S}^c$ i.e.
$$
\mathbb{R}^{n+1} \setminus \Delta_{n+1}  = \mathcal{S}^c\cup\bigcup_{m\in[1,n]} \mathcal{C}_m 
$$
\end{proposition}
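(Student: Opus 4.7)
The plan is to establish the set equality by proving both inclusions via an elementary case analysis based on the ordering of the coordinates of a generic point.

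For the inclusion $\mathbb{R}^{n+1} \setminus \Delta_{n+1} \subseteq \mathcal{S}^c \cup \bigcup_{m \in [1,n]} \mathcal{C}_m$, I would fix an arbitrary $p = (x_n,\dots,x_0) \notin \Delta_{n+1}$ and distinguish two cases according to whether $p$ lies in $\mathcal{S}$. If $p \notin \mathcal{S}$, the claim is immediate. Otherwise the chain $x_n \geq \cdots \geq x_0$ holds, and since $p$ is not on the thin diagonal not all the entries coincide, so at least one of the weak inequalities in this chain must be strict; I pick any index $m \in \{1,\dots,n\}$ with $x_m > x_{m-1}$. The central observation is then that monotonicity of the whole tuple propagates this single strict jump into a full separation between the two blocks of indices: for every $i<m$ and every $j \geq m$ one has $x_j \geq x_m > x_{m-1} \geq x_i$, so the blocks $\{x_j : j\geq m\}$ and $\{x_i : i<m\}$ are strictly separated, and this is exactly the defining condition of $\mathcal{C}_m$.

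The reverse inclusion $\mathcal{S}^c \cup \bigcup_m \mathcal{C}_m \subseteq \mathbb{R}^{n+1} \setminus \Delta_{n+1}$ is then immediate from the dual observation: a point with all coordinates equal automatically sits in $\mathcal{S}$, hence not in $\mathcal{S}^c$, and it fails every strict inequality that appears in the definition of each $\mathcal{C}_m$, so it belongs to none of them.

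I do not foresee any serious obstacle here, because the statement is purely combinatorial: once one observes that a monotone non-constant tuple must admit at least one strictly increasing step, the total ordering inside $\mathcal{S}$ promotes that single step to a clean two-block separation, and no topological or analytic input (not even the openness of the $\mathcal{C}_m$, which will be used elsewhere) is required. The only minor point to be careful about is to fix the direction of the inequality defining $\mathcal{C}_m$ consistently with the convention for $\mathcal{S}$ before running the case analysis.
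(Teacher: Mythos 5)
Your proof is correct and follows essentially the same route as the paper's: one inclusion is immediate, and for the other one takes a point off the thin diagonal, splits into the cases $p\notin\mathcal{S}$ and $p\in\mathcal{S}$, and in the latter case uses that not all weak inequalities in $x_n\geq\dots\geq x_0$ can be equalities to find a strict jump placing $p$ in some $\mathcal{C}_m$. Your write-up is in fact slightly more explicit than the paper's, since you spell out why the single strict step propagates to the full two-block separation required by $\mathcal{C}_m$, and you rightly flag that the inequality in the paper's stated definition of $\mathcal{C}_m$ must be oriented consistently with the convention for $\mathcal{S}$.
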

\begin{proof}
Clearly $\mathcal{S}^c$ and every $\mathcal{C}_m$ are contained in $\mathbb{R}^{n+1}\setminus \Delta_{n+1}$. In order to show the other inclusion, let's take a generic element $X$ of $\mathbb{R}^{n+1} \setminus \Delta_{n+1}$, either $X$ is $\mathcal{S}$ or in its complement. 
We want to show that if $X$ is not in $\mathcal{S}^c$ than it must be contained in $\mathcal{C}_m$ for some $m$. So suppose that $X=(x_n,\dots,x_0)$ is in $\mathcal{S}$, than $x_n\geq \dots \geq x_0$. However, since $X\not\in\Delta_{n+1}$, one of those inequalities must be strict, namely it must exist an $m$ bigger than $1$ and $m$ smaller or equal to $n$ for which $x_{m-1}> x_m$. Hence $X\in\mathcal{C}_m$.
\end{proof}
We shall now discuss the form of $r$ on the patches presented in the previous proposition. 
First of all, we notice that, due to the retardation property, on $\mathcal{S}^c$ the retarded product $r$ must vanish. At the same time, on any $\mathcal{C}_m$ we might define $r$ as following.

\begin{definition}
On $\mathcal{C}_m$, we define  $r_m(x_n,\dots,x_0):= \left.r\right|_{\mathcal{C}_m}(x_n,\dots,x_0)$ as 
the $\star_\omega$ product of two $r$ with a smaller number of entries, namely
\begin{equation*}
r_m(x_n,\dots,x_0) :=  r(x_n,\dots,x_m) \star_\omega r(x_{m-1},\dots,x_0),\quad r(x_0):=\dot{\varphi}^2(x_0);.
\end{equation*}

\end{definition}

In order to be allowed to use these elements consistently we need to 
check that on the overlap, of different $\mathcal{C}_m$ the different factorization coincide.
\begin{proposition}
Different factorizations coincide on the overlap, namely
$$
\left.r_{m_1}\right|_{\mathcal{C}_{m_1}\cap\mathcal{C}_{m_2}}
=
\left.r_{m_2}\right|_{\mathcal{C}_{m_1}\cap\mathcal{C}_{m_2}}\;.
$$
and
$$
\left.r_{m}\right|_{\mathcal{C}_{m}\cap\mathcal{S}^c}=0\;
$$
for any $m_1,m_2,m\in\{1,\ldots,n\}$. 
\end{proposition}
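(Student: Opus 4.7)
The plan is to prove the two assertions separately, reducing each to the inductive hypothesis on retarded products with at most $n$ entries. Without loss of generality take $m_1<m_2$ in the first assertion. I would split each overlap into its intersection with $\mathcal{S}$ and with $\mathcal{S}^c$ and treat these two pieces by different mechanisms.

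On $\mathcal{C}_{m_1}\cap\mathcal{C}_{m_2}\cap\mathcal{S}$ the points are non-increasing ($\mathcal{S}$), and the two chart conditions supply the strict drops $x_{m_1-1}>x_{m_1}$ and $x_{m_2-1}>x_{m_2}$. Then the subtuple $(x_n,\ldots,x_{m_1})$ is itself non-increasing with a strict drop at $m_2$, and symmetrically $(x_{m_2-1},\ldots,x_0)$ has a strict drop at $m_1$. The inductive factorization property therefore applies to both subtuples:
\begin{align*}
r(x_n,\ldots,x_{m_1}) &= r(x_n,\ldots,x_{m_2})\star_\omega r(x_{m_2-1},\ldots,x_{m_1}),\\
r(x_{m_2-1},\ldots,x_0) &= r(x_{m_2-1},\ldots,x_{m_1})\star_\omega r(x_{m_1-1},\ldots,x_0).
\end{align*}
Substituting these into the defining formulas for $r_{m_1}$ and $r_{m_2}$ and invoking associativity of $\star_\omega$ yields $r_{m_1}=r_{m_2}$ on this piece.

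For any point $X\in\mathcal{C}_m\cap\mathcal{S}^c$ (and in particular $X\in\mathcal{C}_{m_1}\cap\mathcal{C}_{m_2}\cap\mathcal{S}^c$ viewed from either chart), I pick an inversion $i<j$ with $x_i<x_j$. The defining inequality of $\mathcal{C}_m$ forbids inversions straddling the cut at $m$, so $i,j$ lie on the same side of $m$. The inversion is thus inherited by either the right subtuple $(x_n,\ldots,x_m)$ or the left one $(x_{m-1},\ldots,x_0)$, and the corresponding $r$-factor vanishes by the inductive retardation property. Hence $r_m=0$ on $\mathcal{C}_m\cap\mathcal{S}^c$; applying this to both $m_1$ and $m_2$ on the triple intersection $\mathcal{C}_{m_1}\cap\mathcal{C}_{m_2}\cap\mathcal{S}^c$ also gives $r_{m_1}=0=r_{m_2}$ there, completing the first identity.

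The main obstacle is verifying that the subtuples we feed into the inductive factorization actually satisfy its hypotheses, namely that they are non-increasing with a strict drop at the chosen cut. This is precisely what the three-block picture on $\mathcal{C}_{m_1}\cap\mathcal{C}_{m_2}\cap\mathcal{S}$ provides: membership in $\mathcal{S}$ gives the ordering within each block, and the two chart conditions give the strict drops at $m_1$ and $m_2$. On the complement piece $\mathcal{S}^c$ no such delicate checking is required because the retardation property forces both sides to vanish.
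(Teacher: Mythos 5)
Your proof is correct and follows essentially the same route as the paper: refine the factorizations to a common three-block product via the inductive factorization property and associativity of $\star_\omega$, and handle $\mathcal{S}^c$ by locating the order violation inside one of the two subtuples so that retardation forces a factor to vanish. The only (harmless) difference is that you explicitly split the overlap $\mathcal{C}_{m_1}\cap\mathcal{C}_{m_2}$ into its $\mathcal{S}$ and $\mathcal{S}^c$ pieces, whereas the paper factorizes $r_{m_2}$ directly and regroups with associativity to recover $r_{m_1}$, leaving the non-ordered case implicit.
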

\begin{proof}
Consider, first of all, the points in $(x_n,\dots,x_0) \in \mathcal{C}_{m_1}\cap\mathcal{C}_{m_2}$, and suppose, without loosing generality, that $m_1<m_2$. Hence, due to retardation property,
$$
r_{m_2}(x_n,\dots, x_0) = r(x_n,\dots,x_{m_2}) \star_\omega r(x_{m_2-1},\dots, x_0) \;.
$$
Since $m_1<m_2$ we can further factorize the last factor in the right hand side of the previous expression, namely
$$
r_{m_2}(x_n,\dots, x_0) = r(x_n,\dots,x_{m_2}) \star_\omega r(x_{m_2-1},\dots, x_{m_1}) \star_\omega r(x_{m_1-1},\dots, x_0) \;.
$$
Since the star product is associative, we can conclude the first part of the proof, namely we have
$$
r_{m_2}(x_n,\dots, x_0) = r(x_n,\dots, x_{m_1}) \star_\omega r(x_{m_1-1},\dots, x_0) 
=r_{m_1}(x_n,\dots, x_0) \;.
$$
Now let us consider the points in $\mathcal{C}_m\cap \mathcal{S}^c$, hence, in there,
$$
r_m(x_n,\dots,x_0) = r(x_n,\dots,x_{m}) \star_\omega r(x_{m-1},\dots, x_0)\;.
$$
However, $(x_n,\dots,x_0)\in\mathcal{S}^c$, so at least one of the  inequality 
$x_n\geq\dots \geq x_0$ does not hold.
Since it is still true that $x_m>x_{m-1}$, either $r(x_n,\dots,x_{m})$ or $r(x_{m-1},\dots, x_0)$ vanish.
\end{proof}
We are now ready to glue together all the $r_m$ introduced above to a distribution with all the desired property defined outside $\Delta_{n+1}$.
\begin{definition}
On $\mathbb{R}^{n+1}\setminus\Delta_{n+1}$, we define $^0r(x_n,\dots,x_0)$ as 
$$
^0r(x_n,\dots,x_0) = \sum_m f_m(x_n,\dots,x_0) r_m(x_n,\dots,x_0)\;,
$$
where $\{f_m\}_{0\leq m\leq n}$ is a smooth partition of unity on $\mathbb{R}^{n+1}\setminus \Delta_{n+1}$ adapted to 
$\{\mathcal{S}^c,\mathcal{C}_1,\dots,\mathcal{C}_n\}$. 
\end{definition}
Hence, on every $X$ in $\mathbb{R}^{n+1}\setminus \Delta_{n+1}$ it holds that $\sum_{m=1}^n f_n(X) = 1$, furthermore $f_m$ is supported on $\mathcal{C}_m$ for $m\geq 1$ and on $\mathcal{S}^c$ otherwise: the existence of such a partition of unity is an immediate consequence of the finiteness of the covering of $\mathbb{R}^{n+1}\setminus\Delta_{n+1}$.

By definition the functional valued distributions $^0r(x_n,\ldots,x_0)$ satisfy the properties of functional equation, retardation, factorization and initial element.
Furthermore the inductive construction described above shows that even the procedure of extension of $^0r$ on the diagonals could be done inductively: in fact, assuming to have fixed an extension of $^0r(x_p,\ldots,x_0)\forall p=1,\ldots,n-1$, by the factorization property the extension of $^0r(x_n,\ldots,x_0)$ should be done only on the full diagonal $\Delta_{n+1}$.


\subsection{Extension to the thin diagonal $\Delta_{n+1}$}\label{sec:extension}

In this subsection, we shall discuss the extension of the distributions $^0r(x_n,\dots,x_0)$ obtained in the preceding section to the thin diagonal $\Delta_{n+1}$.
This will complete the inductive construction of $r$.
As discussed above, given a set of $r(x_p,\ldots,x_0)$ $\forall p=0,\ldots,n-1$, we can construct $^0r(x_n,\ldots,x_0)$ which is defined up to the thin diagonal, namely it is an element of $\mathcal{D}'(\mathbb{R}^n\setminus \Delta_{n+1})$; in order to complete the inductive step, we need to extend it on $\Delta_{n+1}$.
For this we proceed exactly as in \cite{Steinmann,BF00}: if the scaling degree of $^0r$ towards $\Delta$ is finite, we may apply either Theorem 5.2 or Theorem 5.3 of \cite{BF00} to get the desired extension.
The scaling degree towards all the diagonals is always finite, so we can extend all these distributions preserving the scaling degree: in this procedure there is certain renormalization freedom, which can be fixed at all orders.
As a consequence, the procedure to extend $r(x_n,\dots,x_0)$ to the thin diagonal is not unique, however, the ambiguities in this extension are all local and they are thus characterized by a finite set of numbers. This freedom is usually called {\bf renormalization freedom} and it can be restricted applying some physical criteria like translation invariance. The residual freedom is usually fixed in accordance with experiments.
%

After employing the renormalization conditions as the one discussed above, we notice that the following proposition holds:
\begin{proposition}
The functionals $\boldsymbol{\phi}_n(f)$ constructed in \eqref{eq:phin-r} and in \eqref{eq:phi0e1-r} are elements of $\mathcal{F}(\ga)$.
\end{proposition}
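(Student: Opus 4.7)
The plan is to verify the three defining conditions of $\mathcal{F}(\gamma)$ in (\ref{eq:spetral-condition-gamma}) — differentiability, compact support, and the wavefront set condition — for the functionals $\boldsymbol{\phi}_n(f)$. The first two conditions are essentially bookkeeping about the retarded structure of the integral kernel, while the wavefront analysis is where the real work lies.

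First I would establish that each $\boldsymbol{\phi}_n(f)$ is a well-defined smooth functional on $C^\infty(\gamma)$ with compact support. The outer factor $f_R$ appearing in \eqref{eq:phin-r} is a smooth function whose support lies in the past of $\mathrm{supp}(f)$; combined with the compactly supported cutoff $\lambda\in\mathcal{D}(\bR)$ and the ordering enforced by the retardation property of $r$, the effective domain of integration is the intersection of a forward cone in $x_{n-1}$ with a product of compact $\lambda$-supports, which is compact. Smoothness of the kernel factors $S(x_i,x_{i-1})$ and the classical solution $\phi_0$ guarantees that the only non-smooth object entering the integral is $r(x_{n-1},\dots,x_0)$, whose functional derivatives in $\varphi$ are by construction distributions on $\bR^n$.

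The main technical step is the wavefront analysis. For a fixed $\varphi$, the $k$-th functional derivative of $\boldsymbol{\phi}_n(f)$ is, up to smooth integral factors, the pairing of the compactly supported smooth density built from $f_R$, the $S$ factors, $\lambda$ and $\phi_0$, with the $k$-th functional derivative of $r(x_{n-1},\dots,x_0)$. By the inductive construction of Section \ref{sec:inductive}, on each patch $\mathcal{C}_m$ the distribution $r$ is a $\star_\omega$-product of lower order retarded products, so its functional derivatives acquire their wavefront content from repeated contractions against the Hadamard two-point function restricted to $\gamma\times\gamma$. Proposition \ref{th:H-restriction} pins down this wavefront set to the set where momenta pair up with one positive and one negative sign; iterating along the Hörmander rules for composition (Theorem 8.2.13 of \cite{Hormander}) and using the retardation-induced causal factorization, every conormal direction produced by $r^{(k)}$ has at least one positive and one negative entry among the covectors $(k_1,\dots,k_k)$, so no such direction lies entirely in $\overline{\bR_+}^k\cup\overline{\bR_-}^k$.

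The subtle point, and the place I expect the argument to be most delicate, is the extension to the thin diagonal $\Delta_{n+1}$ described in Section \ref{sec:extension}. A priori, extending a distribution across a submanifold can enlarge its wavefront set in the conormal directions of that submanifold. I would therefore invoke the scaling-degree-preserving extension theorems of \cite{BF00} (their Theorems 5.2 and 5.3) together with the standard fact that such an extension only adds wavefront covectors in $N^*\Delta_{n+1}$, i.e.\ covectors of the form $(k_1,\dots,k_{n})$ with $\sum_i k_i = 0$. Such a covector cannot belong to $\overline{\bR_+}^{n}\cup\overline{\bR_-}^{n}$ unless all its entries vanish, so the wavefront set condition (\ref{eq:spetral-condition-gamma}) is preserved under the extension. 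Combining this with the patchwise estimate above yields the wavefront bound for $r^{(k)}$ on all of $\bR^{n+1}$, and the integration against the smooth compactly supported factors in \eqref{eq:phin-r} transmits the same bound to $\boldsymbol{\phi}_n(f)^{(k)}$, finishing the proof.
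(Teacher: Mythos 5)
Your plan follows essentially the same route as the paper's proof: reduce the claim to the wavefront set condition for the retarded products $r$, verify it off the diagonal via the causal factorization into $\star_\omega$-products (whose singular directions are controlled by the restricted Hadamard function of Proposition \ref{th:H-restriction}), and observe that the diagonal extension only contributes covectors conormal to $\Delta_{n+1}$, i.e.\ with $\sum_i k_i=0$, which cannot lie in $\overline{\bR_+}^{\,m}\cup\overline{\bR_-}^{\,m}$. The additional bookkeeping you include on compact support and differentiability is consistent with, though more explicit than, what the paper records.
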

\begin{proof}
In order to prove this assertion we have to analyze the wave front set of the functional derivatives of $\boldsymbol{\phi}_n(f)$, namely, it must be such that
\begin{equation}\label{eq: condizione sul WF}
WF(\boldsymbol{\phi}^{(m)}_n(f))\cap(\overline{\mathbb{R}^m_+}\cup\overline{\mathbb{R}^m_-})=\emptyset.
\end{equation} 
First of all we notice that only a finite number of functional derivative of $\boldsymbol{\phi}_n$ do not vanish 
$$
\boldsymbol{\phi}_n^{(m)}(f) =0 \qquad \text{for}\qquad m>2n\;.
$$ 
The $m$-th derivative of $\boldsymbol{\phi}_n$ contains a linear combination of $r$ with a different number of arguments which are sometime contracted with smooth functions ($S$ or $\lambda$ in \eqref{eq:phin-r}): by linearity of the wave front set, and by a straightforward application of theorem 8.2.12 in \cite{Hormander}, it is sufficient to prove that each $r$ satisfies \eqref{eq: condizione sul WF}.
Let us start analyzing the one with two entries: 
$
r(x_1,x_0)
$ which is the first one with non empty wave front set.
We see that, outside the thin diagonal $\Delta_2$, $r={^0r}$ which is either zero or $\dot{\varphi}^2(x_1)\star_\omega\dot{\varphi}^2(x_0)$ and in both cases the property \eqref{eq: condizione sul WF} is satisfied.
Regarding its extension $r$, we have that  $WF(r)$ is contained within the closure in space of $WF(^0r)$ (to cover also $\Delta_2$) joined with the wave front set of the 
additional terms imposed by the employed renormalization techniques which are proportional to $\delta(x_1,x_0)$ or its derivative.
The wave front set of $\delta(x_1,x_0)$ has empty intersection with $\overline{\mathbb{R}}^2_+\cup\overline{\mathbb{R}_-^2}$ because
$WF(\delta(x_1,\ldots,x_n))=\{(x_1,\ldots,x_n,k_1,\ldots,k_n)|\ x_1=\ldots=x_n,\ k_1+\ldots+k_n=0\}$, property \eqref{eq: condizione sul WF} is then satisfied by $r(x_1,x_0)$.
Since all other $r$ are constructed by induction, in order to conclude the proof, we just need to check the $n-$th step.
We see that $r$ with $n+1$ entries, outside the thin diagonal is constructed by means of the factorization property and the $\star_\omega$ product among two $r$ enjoys the property \eqref{eq: condizione sul WF}.
At the last step we notice that \eqref{eq: condizione sul WF} holds also on $\Delta_{n+1}$, because 
of the $\delta$-form of the terms which contain the renormalization constants.
\end{proof}

\subsection{Adiabatic limit}\label{sec:adiabatic-limit}

So far we have analyzed the ultraviolet problems present in the perturbative construction of $\phi$. We shall now discuss the adiabatic limit, namely the limit $\lambda\to 1$ and we shall see that no infrared problems arise. Furthermore, whenever $f$ is a compactly supported smooth function,
the limit $\lambda\to 1$ of every $\boldsymbol{\phi}_n(f)$ can be performed within $\mathcal{F}(\gamma)$. 
%
%
%
%
%
%
In order to prove this claim, we notice that, for every  classical solution $\phi_0$ of the free equation \eqref{eq:eqdiff1d} 
we want to perturb, 
 it is always possible to 
select a compactly supported smooth function $K$ whose support lies in the past of the support $f$ without intersecting $\supp f$ ($\supp K \subset J^-(\supp f)$ and $\supp f \cap \supp K = \emptyset$) and such that $R_V(K)$ equals $\phi_0$ in the future of the support of $K$, namely
\[
\phi_0(x) = \phi^+_0(x) := R_V(K)(x) \;,\qquad \forall x\in J^+(\supp K) \setminus \supp K \;.
\]
With this choice of $K$, since the equation \eqref{eq:yang-feldmann} is linear in $\phi$,
 we can formally split any solution 
 in two parts $\phi=\phi^++\phi^-$ where $\phi^\pm$ solve an inhomogeneous equation 
\[
\ddot{\phi}^\pm
+ V\phi^\pm
+\frac{1}{3}\dot{\varphi}^2\phi^\pm=\pm K,
\]
and where $\phi^\pm$ are supported in the future/past of the support of $K$ ($\supp \phi^\pm\subset J^\pm(\supp K)$). 
The solution $\phi^+(f)$ can be obtained perturbatively\footnote{The field $\phi^-(f)$ can be obtained  perturbatively employing advanced products of $\dot{\boldsymbol{\varphi}}^2$. The advanced products can be constructed in a similar way as the distributions $r$ discussed in the paper.}
to all order employing equations \eqref{eq:phi0e1-r} and \eqref{eq:phin-r} and replacing $\phi_0$ with $\phi^+_0 := R_V(K)$.     
Finally, thanks to the choice of the support of $K$, for every $n$, $\boldsymbol{\phi}_n(f)$ coincides with 
$\boldsymbol{\phi}_n^+(f)$, however now all the retarded integrals appearing in $\boldsymbol{\phi}_n^+(f)$ are performed on compact intervals  also when $\lambda$ is chosen to be equal to $1$.

\section{Perturbative expansion for the expansion parameter over a flat background}\label{sec:4}

In this section we shall apply the previously introduced general picture in order to estimate the behavior of a congruence of parallel timelike geodesics on a flat background,  
i.e. $(M,g)=(\mathbb{M}^4,\eta)$ Minkowski spacetime. 
Notice that, with a suitable choice of the renormalization conditions, the Minkowksi spacetime is a solution of the semiclassical Einstein equation whenever the stress tensor of the matter field $\varphi$ is considered to be in the vacuum state $\Omega$. Furthermore, 
since $\sigma$, $\omega$ and $T_{anomalous}$ vanish in flat background, 
 the equation \eqref{eq:quantum-one-interaction-YF} reads
\begin{equation*}
\ddot{\phi}+\frac{1}{3}\lambda\dot{\varphi}^2\phi=0\;.
\end{equation*}
The explicit expression for the retarded fundamental solution $R_V$, obtained imposing initial asymptotic condition for $t\rightarrow-\infty$, is
\begin{equation}
R_V(h)(t)
= \int h(s)\Theta(t-s)ds,
\end{equation}
where $\Theta(x):= S(x)\vartheta(x):= x\vartheta(x)$.
Hence, the classical solution of this problem enjoys 
\begin{equation}
\phi(t)=\phi_0-\frac{1}{3}\int ds\ \Theta(t-s)\lambda(s)\dot{\varphi}^2(s)\phi(s)\;,
\end{equation}
where, since $\varphi$ vanishes classically, the classical solution $\phi=\phi_0$ is chosen to be a constant.
%

We shall now pass to discuss the form of the quantum field describing the fluctuations passively induced by the matter ones. After smearing with a Gaussian shaped test function $f$, we shall evaluate both $\langle\phi(f)\rangle$ and  $\varsigma^2(f):=\langle(\phi(f)-\langle\phi(f)\rangle)^2\rangle$  on the Minkowski ground state expanding $\phi$ up to the second perturbative order.
It is thus possible to approximate his probability distribution with a Gaussian after smearing with suitable compactly supported smooth functions.

Doing this calculation we use the pull-back on the algebra $\mathcal{A}(\gamma)$ of the Minkowski vacuum state under the restriction map on any geodesic $\gamma$ of the congruence.
The obtained two-point function, parametrized with respect of the proper time, is  
\begin{equation}\label{eq:2pt-mink-rest}
\Omega_2(t,t') := \frac{1}{4\pi^2}\frac{1}{(t-t'-i\epsilon)^2} \;.
\end{equation}
This two-point function can be used to construct the $\star_\omega$ product on $\mathcal{F}(\mathbb{R})$. 

We shall now write the explicit form of $\boldsymbol{\phi}(f)$ up to the second order in the coupling parameter, which is the first non-trivial perturbation order.
Since, for the chosen congruence $\phi_0$ is constant, we have immediately from \eqref{eq:phi0e1-r} that 
\begin{equation}\label{eq:first-order-mink}
\boldsymbol{\phi}_1(f) = \int f(x_1) R_V(x_1,x_0) \lambda(x_0) \dot\varphi^2(x_0)  \phi_0\; dx_1dx_0\;.
\end{equation}
To write the explicit form of the $\boldsymbol{\phi}_2$ we need to construct 
$r(x_1,x_0)$ employing the definition \eqref{eq:def-r} valid when $x_1>x_0$ and extending it to the full space along the guidelines discussed in section \ref{sec:extension}.
We have that, up to the diagonal, it is  formed by three contributions
$$
^0r(x_1,x_0) := 
\frac{9}{\pi^4}\frac{\vartheta(x_1-x_0)}{(x_1-x_0)^8}+
\frac{6}{\pi^2} \frac{\vartheta(x_1-x_0)}{(x_1-x_0)^4} \dot\varphi(x_1)\dot\varphi(x_0)+
\vartheta(x_1-x_0)  \dot\varphi^2(x_1)\dot\varphi^2(x_0),
$$
hence it is namely the integral kernel of a distribution in $\mathcal{D}'(\mathbb{R}^2)$, which is also a functional of $\varphi\in C^\infty(\mathbb{R})$.
We need to extend it to the total diagonal $x_1=x_0$. 

The above written distribution is well defined only on $\mathcal{D}'(\mathbb{R}\backslash\{0\})$: using the techniques of scaling degree, see \cite{BFK}, the general expression for all the extensions of $^0r$ to $\mathcal{D}(\mathbb{R})$ which preserve his scaling degree at $0$ are
\begin{gather*}
r(x_1,x_0)
=-\frac{9}{\pi^4}\frac{1}{7!}\partial^8[\vartheta\ln](x_1-x_0)
+\sum_{\alpha=0}^7a_\alpha\delta^{(\alpha)}(x_1-x_0)+
\\
-\frac{1}{\pi^2}\partial^4[\vartheta\ln](x_1-x_0)\dot\varphi(x_1)
\dot\varphi(x_0)+\sum_{\alpha=0}^3b_\alpha\delta^{(\alpha)}(x_1-x_0)\dot\varphi(x_1)
\dot\varphi(x_0)+
\\
+\vartheta(x_1-x_0)\dot\varphi^2(x_1)
\dot\varphi^2(x_0)
\end{gather*}
where the derivative should be interpreted in distributional sense; the extension given by $r$ will be fixed by an appropriate choice of the renormalization constants $a_\alpha$ and $b_\alpha$. 
 
\subsection{Probability distribution for $\boldsymbol{\phi}(f)$}

We shall now estimate the probability of a focusing of the geodesic congruence in a fixed interval of time. In order to accomplish this task we shall compute the first two moments of that probability distribution $\mathbb{P}$ which describe of the expectation values of $\boldsymbol{\phi}(f)$ up to the second order in the coupling parameter. We shall then use the obtained values in order to find the Gaussian distributions which better approximate $\mathbb{P}$.
After that we may evaluate the probability of a collapse computing $\mathbb{P}(\boldsymbol{\phi}(f)\leq 0)$.

We proceed discussing the expectation value $\langle\boldsymbol{\phi}(f)\rangle:=\Omega(\boldsymbol{\phi}(f))$ of the field $\boldsymbol{\phi}(f)$ on the vacuum state $\Omega$. 
We notice in particular that 
$$
\langle\boldsymbol{\phi}_1(f)\rangle =0 
$$ 
while 
\begin{gather}\label{eq: nucleo integrale della media del campo dilatonico}
\langle\boldsymbol{\phi}_2(f)\rangle=
\phi_0\int dp\ f_R(p)\lambda(p)\left[\frac{9}{7!\pi^4}[(8\lambda^{(7)}\ast \vartheta\ln)(p)-(\lambda^{(8)}\ast S\vartheta\ln)(p)]
+\sum_{\alpha=0}^6(-1)^\alpha a_\alpha\lambda^{(\alpha)}(p)\right]
\end{gather}
where $\ast$ indicates the convolution.

%
As discussed in section \ref{sec:adiabatic-limit}, we can avoid infrared problems arising in the adiabatic limit, reformulating the problem in a suitable way. 
%
Here we notice that, the previous formula (\ref{eq: nucleo integrale della media del campo dilatonico}) 
gives meaningful results in the limit $\lambda\to 1$ for every test function $f$ provided $a_0$ is chosen to vanish.
With this choice one gets
$$
\lim_{\lambda\rightarrow 1}\langle\boldsymbol{\phi}(f)\rangle_\lambda=\phi_0 \int f  dt\;.
$$
\subsection{Approximated variance $\varsigma^2(f)$ and decay probability}
In order to obtain an expression for the variance $\varsigma^2{(f)} = {\langle\boldsymbol{\phi}^2(f)\rangle-\langle\boldsymbol{\phi}(f)\rangle^2}$ of the expectation value of the fluctuation $\phi$ smeared with a test function $f$ up to the second order in the $\lambda$ we just need to evaluate
$\langle\boldsymbol{\phi}_1(f)\star_\omega\boldsymbol{\phi}_1(f)\rangle$.
Recalling the form of $\boldsymbol{\phi}_1(f)$ obtained in \eqref{eq:first-order-mink} and the form of the two-point function restricted on the studied geodesic congruence given in \eqref{eq:2pt-mink-rest}, we get 
\begin{gather}
\varsigma^2(f) \simeq \langle\boldsymbol{\phi}_1(f)\star_\omega\boldsymbol{\phi}_1(f)\rangle
=
\nonumber
\\
=
\frac{\phi_0^2}{\pi^2}\lim_{\epsilon\rightarrow 0^+}\int dx_0dy_0\
f_R(x_0)f_R(y_0) \frac{\lambda(x_0)\lambda(y_0)}{(x_0-y_0+\imath\epsilon)^8},
\end{gather}
for all $f\in\mathcal{D}(\mathbb{R})$, which is valid up to the second order in the coupling parameter $\lambda$.
Passing to the Fourier transform\footnote{Our convention for the Fourier transform will be
\begin{equation*}
\widehat{f}(\xi):=\int d^dx\ f(x)e^{-\imath x\xi},\quad
f(x)=\frac{1}{(2\pi)^d}\int d^d\xi\ \widehat{f}(\xi)e^{\imath\xi x}.
\end{equation*}}, the previous expression can be simplified considerably, and for real smearing functions we have 
 finally
\begin{eqnarray}
\nonumber
\varsigma^2(f)
\simeq
\lim_{\epsilon_1,\epsilon_2\rightarrow 0^+}\frac{\phi_0^2}{7!\pi^2}\int_0^{+\infty} dp\ p^7 l_{\epsilon_1}(p,f)l_{\epsilon_2}(-p,f),
\end{eqnarray}
where $l_\epsilon(p,f):=\int d\xi\widehat{f}(-\xi)\widehat{\lambda}(\xi-p)/(\xi-\imath\epsilon)^2$.
%
The above obtained expression for $\varsigma^2(f)$ can be used to perform the adiabatic limit $\lambda\rightarrow 1$. In particular, in the case of real smearing functions $f$, this gives 
\begin{equation}\label{eq:varianza nel limite adiabatico}
\varsigma^2(f)
=\frac{\phi_0^2}{\pi^2}\frac{1}{7!}\int_0^{+\infty}dp \;p^3\overline{\widehat{f}(p)}\widehat{f}(p)\;.
\end{equation}
By a dilation of the test function $f$, i.e. smearing with $f_\tau(x):= f(x/\tau)/\tau$, we find $\varsigma^2(f_\tau,f_\tau)=\varsigma^2(f,f)/\tau^4$ which is consistent with the fact that the variance should vanish when the smearing function has a large support.
We shall furthermore notice that $\varsigma(f)$, at the second perturbative order, is not affected by any renormalization freedom.  

\vspace{0.5cm}
We notice that, even if we are working with distributions in $\mathcal{D}'(\mathbb{R}^2)$, the first two-moments of the distribution  of the expectation values of $\boldsymbol{\phi}(f)$ are meaningful even if $f$ is chosen to be a Schwartz function. Hence,  
smearing with a Gaussian $f(t):= \exp(-t^2)/\sqrt{\pi}$ centered at $t=0$ one gets 
$\varsigma^2(f)=2\phi_0^2/7!$.
The corresponding estimate for the probability of a collapse of the congruence of geodesics, is thus
\begin{equation}
\mathbb{P}(\phi(f)\leq 0)
=\frac{1}{\sqrt{2\pi}\varsigma(f)}\int_{-\infty}^0ds\ \exp\left(-\frac{(s-\phi_0)^2}{2\varsigma^2(f)}\right)
=N\left(-\frac{\phi_0}{\varsigma(f)},0,1\right),
\end{equation}
where $N(x,y,z)$ is the normal cumulative distribution, namely the function which computes the probability that a Gaussian of average $y$ and variance $z^2$ takes value in the interval $(-\infty,x)$.

Taking into account that, for a ``dilation'' $f\rightarrow f_\tau$ we find $\varsigma(f)\rightarrow\varsigma(f)/\tau^2$, one can compute the above written probability in terms of the dilatation parameter $\tau$:
\begin{equation}
\mathbb{P}(\phi\leq 0)
=N\left(-\tau^2,0,1\right),
\end{equation}
having absorbed a factor $\sqrt{2/7!}$ in the definition of $\tau$.

Fixing a value $\tau$ we can regard $\phi(f_\tau)$ as a random process of Poisson variables on a discrete interval of time with step $\tau$ (the variance of the Gaussian smearing function): the probability of having a  collapse is then well approximated  by an exponential of parameter $\beta_\tau:=\mathbb{P}(\phi_\tau\leq 0)$ and average $1/\beta_\tau$.
This conclusion is very similar to the result obtained in \cite{Carlip} in the context of two-dimensional dilation gravity. Of course the obtained result could be significantly different considering a better approximation for $\varsigma(f)$ and taking into account larger moments of the probability distribution.

\section{Conclusions}

In this paper we discussed the passive influence of quantum matter fluctuations on the expansion parameter describing a congruence of timelike geodesics. This problem has classically great importance in General Relativity, essentially for the powerful applications in the characterization of spacetime singularities \cite{HawkingEllis, Wald}.
Including it in the context of Quantum Field Theory on Curved Background represents an important step for a deeper insight into problems like the formation of singularity in a semiclassical regime or semiclassical black hole evaporations.
On the mathematical point of view, the problem we dealt with was the solution of the Raychaudhuri equation, which couples quantum matter fields with a geometrical field, the expansion parameter $\theta$ of the congruence of geodesics.
In more detail, considering a congruence of timelike geodesics and a quantum massless scalar field $\varphi$, we investigated the passive influence of fluctuations of $\varphi$ on the expansion parameter $\theta$ of the geodesics.

A similar work has been presented in \cite{Carlip}, where a probability distribution for a collapse of the congruence, which correspond to a negative divergence of $\theta$, was computed by a numerical procedure based on heuristic considerations in the case of a two-dimensional dilaton model for gravity.

We have shown that a more rigorous analysis can be performed restricting the focus on timelike geodesics and employing methods and techniques proper of perturbative Algebraic Quantum Field Theory (pAQFT).
In particular, we showed that the field describing the logarithmic fluctuations of the expansion parameter can be constructed as a formal power series on the algebra of matter fields restricted on any geodesics forming the congruence.

As an application, we applied the previously described procedure to the simple case furnished by Minkowski spacetime.
Here we have constructed $\phi$ up to the second order in $\lambda$ 
for a massless minimally coupled field in the vacuum state.
We have then estimated the probability distribution of $\phi$ with the Gaussian distribution, whose first two moments equal the expectation values previously computed. 
Proceeding in this way, after smearing with a suitable test function, and considering a random process on a discrete interval of time of fixed step, it is possible to estimate the mean time to have a collapse as $1/\beta_\tau,\ \beta_\tau:=\mathbb{P}(\phi(f_\tau)<0)$, obtaining results qualitatively similar to the one of \cite{Carlip}.

\section*{Acknowledgments}
  We would like to thank T.-P. Hack and V. Moretti for helpful discussions.
  We would also like to thank the referee for suggesting us a clear and elegant treatment of the adiabatic limit.
  The work of N.P. has been supported partly by the Indam-GNFM project ``Influenza della materia quantistica sulle fluttuazioni gravitazionali'' (2013).

\end{document}